\newcommand*{\cl}[1]{{\mathcal{#1}}}
\newcommand*{\bb}[1]{{\mathbb{#1}}}
\newcommand{\ket}[1]{\left|#1\right>}
\newcommand{\proj}[2]{| #1 \rangle\!\langle #2 |}
\newcommand*{\tn}[1]{{\textnormal{#1}}}
\newcommand*{\1}{{\mathbb{1}}}
\newcommand{\T}{\mbox{$\textnormal{Tr}$}}
\theoremstyle{plain}
\newtheorem{proposition}{Proposition}
\newtheorem{theorem}{Theorem}
\newtheorem{lemma}[theorem]{Lemma}
\theoremstyle{definition}
\newtheorem{definition}{Definition}
\date{\today}
\begin{document}

\title{
Approximate private quantum channels on fermionic Gaussian systems
}

\author{Kabgyun Jeong}
\email{kgjeong6@snu.ac.kr}
\affiliation{Research Institute of Mathematics, Department of Mathematical Sciences, Seoul National University, Seoul 08826, Korea}
\affiliation{School of Computational Sciences, Korea Institute for Advanced Study, Seoul 02455, Korea}

\begin{abstract}
The private quantum channel (PQC) maps any quantum state to the maximally mixed state for the discrete as well as the bosonic Gaussian quantum systems, and it has fundamental meaning on the quantum cryptographic tasks and the quantum channel capacity problems. In this paper, we introduce a notion of approximate private quantum channel ($\varepsilon$-PQC) on \emph{fermionic} Gaussian systems (i.e., $\varepsilon$-FPQC), and construct its explicit form of the fermionic (Gaussian) private quantum channel. First of all, we suggest a general structure for $\varepsilon$-FPQC on the fermionic Gaussian systems with respect to the Schatten $p$-norm class, and then we give an explicit proof of the statement in the trace norm. In addition, we study that the cardinality of a set of fermionic unitary operators agrees on the $\varepsilon$-FPQC condition in the trace norm case.
\end{abstract}

\maketitle

\section{Introduction}
In general, we can classify two intrinsic physical systems known as a bosonic system and a fermionic system. Each physical systems undergo a certain unitary transformation known as a state evolution or a quantum channel, which is mathematically completely positive and trace-preserving (CPT) map in quantum information theory, from a quantum state to other one~\cite{H16,W17,W18}. Beside the bosonic quantum states and (bosonic) channels are familiar in quantum information theory~\cite{WPG+12},  the fermionic systems and its informational properties are relatively unknown~\cite{B05,SSGK18}. In this reason, we try to investigate a fermionic quantum channel in the Gaussian regime, and then construct a (Gaussian) fermionic private quantum channel (FPQC). The notion of private quantum channel is very useful in the quantum cryptographic protocols as well as the channel capacity problems in QIT. For example, two conjugate pairs of private quantum channels give rise to an additivity violation of the classical capacity for quantum channels~\cite{H09,HW08}, so we naturally expect that FPQC also has the non-additive property.

The private quantum channel (PQC), first introduced by Ambainis \emph{et al}.~\cite{AMTW00}, is a quantum communication primitive for secure transmission of a quantum information, and already has been proved not only the informational-security including its optimality~\cite{NK06,BZ07} but also reported several asymptotic secure transmission rates~\cite{CWY04,D05,H15}. After applying the PQC on any quantum state, the output of the channel is always equivalent to the maximally mixed state, which has most highest entropy for a given dimension of the state, thus any wiretappings are fundamentally impossible. Owing to cryptographic importance of PQC, it has several names such as quantum one-time pad, random unitary channel, $\varepsilon$-randomizing map and so on, here we call the map as $\varepsilon$-private quantum channel ($\varepsilon$-PQC) in the approximate consideration. While the conventional PQC is required exactly $d^2$ unitary operations to encrypt a $d$-dimension quantum state to the perfect maximally mixed state, $\varepsilon$-PQC (i.e., approximate PQC) is only sufficient to use the number of unitary operations being less than $O(d\log d)$~\cite{HLSW04}. 

Here, let us formally define the $\varepsilon$-PQC in the general-setting through the Schatten $p$-norm class~\cite{J14}: For every quantum state $\varrho\in\cl{B}(\bb{C}^d)$ and any $\varepsilon>0$, if a quantum channel $\Lambda:\cl{B}(\bb{C}^{d})\to\cl{B}(\bb{C}^{d})$ satisfies the following inequality of
\begin{equation}
\left\|\Lambda(\varrho)-\frac{\1}{d}\right\|_p\le\frac{\varepsilon}{\sqrt[p]{d^{p-1}}},
\end{equation}
then we \emph{call} the map $\Lambda$ as $\varepsilon$-\emph{private quantum channel} with respect to the Schatten $p$-norm (for all $p\ge1$)~\cite{AS04,DN06}. Notice that $\cl{B}(\bb{C}^{d})$ denotes the bounded linear operator on the $d$-dimensional Hilbert space $\bb{C}^d$, and the Schatten $p$-norm is defined as follows: For any matrix $A\in\cl{B}(\bb{C}^d)$ and for all $1\le p\le\infty$, it has in the form of trace class as
\begin{equation*}
\|A\|_p=\left[\T(A^\dag A)^{p/2}\right]^{1/p}.
\end{equation*}
For convenience, we only consider $p=1$ case below, i.e., the trace norm given by $\|A\|_1=\T\sqrt{A^\dag A}$, however, we formulate the fermionic $\varepsilon$-PQC for arbitrary $p\ge1$ (see Proposition~\ref{prop1}). The operator norm and the Hilbert-Schmidt norm are given similar ways~\cite{AS04,HLSW04}. Thus the $\varepsilon$-PQC in this trace class is taken in the form of $\left\|\Lambda(\varrho)-\frac{\1}{d}\right\|_1\le{\varepsilon}$. Also, there are several variants of the PQC in continuous-variable regimes~\cite{Br05,JKL15,WCHZ14} and the multi-qubit protocol~\cite{JK15}.

Now it is natural to ask how we can characterize the PQC in fermionic Gaussian systems and their impact on channel-capacity problems. At first, we briefly review the basic concepts of fermionic Gaussian systems and the channels.

This paper is organized as follows. In Subsecs.~\ref{subsec:fermion} and \ref{subsec:fpqc}, we review the basic of fermionic Gaussian systems and its representation of private quantum channels, respectively. In Sec.~\ref{sec:main}, we describe our main result on approximate private quantum channels on the fermionic system with explicit construction and proof over the trace norm. Finally we conclude our result in Sec.~\ref{sec:conclusion}.

\subsection{Fermionic Gaussian states} \label{subsec:fermion}
Generally, $M$-mode fermionic systems are associated with a tensor product of  Hilbert space $\cl{H}^{\otimes M}=\bigotimes_{j=1}^M\cl{H}_j$, where $M$ pairs of fermionic annihilation and creation operators $\{\hat{f}_j,\hat{f}_j^\dag\}_{j=1}^M$ correspond to each mode of the total Hilbert space. The operators in the form of $\hat{\mathbf{f}}^T:=(\hat{f}_1,\ldots,\hat{f}_M,\hat{f}_1^\dag,\ldots,\hat{f}_M^\dag)$ satisfy the canonical anti-commutation relation (CAR) such that $\{\hat{f}_k,\hat{f}_\ell^\dag\}=\hat{f}_k\hat{f}_\ell^\dag+\hat{f}_\ell^\dag\hat{f}_k=\delta_{k\ell}\1$. It was known that CAR algebra of $M$-mode fermionic system can be described by a set of the $M$-mode Majorana operators $\{\hat{c}_1,\ldots,\hat{c}_{2M}\}$ such that $\{\hat{c}_k,\hat{c}_{\ell}\}=2\delta_{k\ell}\1$ as well as $\hat{c}_k=\hat{c}_k^\dag$, and those operators in the Clifford algebra have an explicit form:
\begin{align}
\left\{ \begin{array}{ll} 
\hat{c}_{2k-1}&=\sigma_1^z\otimes\cdots\otimes\sigma_{k-1}^z\otimes\sigma_k^x\otimes\1_2\otimes\cdots\otimes\1_2 \\ & \\
\hat{c}_{2k}&=\sigma_1^z\otimes\cdots\otimes\sigma_{k-1}^z\otimes\sigma_k^y\otimes\1_2\otimes\cdots\otimes\1_2,
\end{array}\right.
\end{align}
where $\1_2=\left(\begin{array}{cc} 1 & 0 \\ 0 & 1 \end{array}\right), \sigma_k^x=\left(\begin{array}{cc} 0 &1 \\ 1 & 0 \end{array}\right), \sigma_k^y=\left(\begin{array}{cc} 0 &i \\ -i & 0 \end{array}\right)$, and $\sigma_k^z=\left(\begin{array}{cc} 1 &0 \\ 0 & -1 \end{array}\right)$ are Pauli matrices on the $k$-th qubit. Note that, for each $k$-mode, $\hat{f}_k\hat{f}_k^\dag=\proj{0}{0}_k=\frac{1}{2}(1-i\hat{c}_{2k-1}\hat{c}_{2k})$, $\hat{f}_k^\dag\hat{f}_k=\proj{1}{1}_k=\frac{1}{2}(1+i\hat{c}_{2k-1}\hat{c}_{2k})$, and $\hat{\mathbf{c}}^T:=(\hat{c}_1,\ldots,\hat{c}_{2M})$.

\begin{definition}[Fermionic Gaussian state] 
A fermionic state $\rho_F$ is Gaussian, if it can be defined by
\begin{equation}
\rho_F=\lim_{\beta\to\pm\infty}\frac{e^{\beta\hat{H}}}{Z(\beta)},
\end{equation}
where $\beta$ is the inverse temperature, $Z(\beta)=\T(e^{\beta\hat{H}})$ the normalization factor, and the second order Hamiltonian $\hat{H}$ is given by 
\begin{equation}
\hat{H}=\frac{i}{2}\hat{\mathbf{c}}^T\Gamma\hat{\mathbf{c}}+\hat{\mathbf{c}}^T\mathbf{x}.
\end{equation}
Here, $\Gamma=-\Gamma^T\in \cl{M}_{2M}(\bb{R})$ is a real skew-symmetric matrix and $\mathbf{x}\in\bb{R}^{2M}$. For convenience, we will set the temperature parameter as $\beta=1$.
\end{definition}
Now, we only consider the quadratic term of the Hamiltonian $\hat{H}'=\frac{i}{2}\hat{\mathbf{c}}^T\Gamma\hat{\mathbf{c}}$, i.e., fermionic ``even" Gaussian states. For $M$-mode fermionic cases, a Gaussian unitary is naturally given by $e^{i\hat{H}}\in\cl{U}(2M)$, which can be decomposed in the form of $e^{i(\hat{H}_1+\hat{H}_2)}$ through the Lie theory. Then it was known that there \emph{exist} a fermionic Gaussian unitary and $2M\times2M$ orthogonal matrix $e^{\Gamma}\in\tn{SO}(2M)$ satisfying
\begin{equation}
e^{i\hat{H}}\hat{\mathbf{c}}e^{-i\hat{H}}=e^{\Gamma}\hat{\mathbf{c}}.
\end{equation}

For any ($M$-mode fermionic) \emph{even} Gaussian states $\rho_F$, note that there exists a Gaussian unitary operator $e^{i\hat{H}'}$ and an orthogonal matrix $O_\Gamma\in\tn{SO}(2M)$ such that
\begin{align}
\rho_F&=\frac{1}{Z}e^{\hat{H}'} \nonumber\\
&=e^{i\hat{H}'}\cdot\frac{1}{Z}e^{\frac{i}{2}\hat{\mathbf{c}}^TO_\Gamma AO_{\Gamma}^T\hat{\mathbf{c}}}\cdot e^{-i\hat{H}'} \nonumber\\
&=\bigotimes_{k=1}^M\left(\frac{1+\lambda_k}{2}\proj{0}{0}_k+\frac{1-\lambda_k}{2}\proj{1}{1}_k\right)\equiv\bigotimes_{k=1}^M\rho_{F,k},
\end{align}
where $A=\bigoplus_{k=1}^M\left(\begin{array}{cc} 0 & \lambda_j \\ -\lambda_j & 0\end{array}\right)$ with its spectrum $\lambda_k\in[0,1]$.
Furthermore, for any $k$, if $\lambda_k=1$, then $\rho_F\in\cl{H}(\bb{C}^{2M})$ is said to be a pure state, i.e., pure fermionic Gaussian state is given by $\rho_F=\proj{0}{0}_1\otimes\cdots\otimes\proj{0}{0}_M$. In those cases, the entropy of the $M$-mode fermionic Gaussian state is defined by
\begin{equation}
S(\rho_F)=\sum_{k=1}^MS(\rho_{F,k}),
\end{equation}
where $S(\rho_{F,k}):=-\frac{1+\lambda_k}{2}\log\frac{1+\lambda_k}{2}-\frac{1-\lambda_k}{2}\log\frac{1-\lambda_k}{2}$ is the von Neumann entropy of the fermionic system.

It is very useful to study a private quantum channel in quantum information science, because an output of PQC gives birth to a maximally mixed state (MMS) at the end of the channel. This output state of the channel directly corresponds to a maximally entangled state (MES) as in Ref.~\cite{JL16,LKLJ19} via a quantum purification method~\cite{HJW93}. However, still a notion of fermionic private quantum channel does not exist.

\subsection{Representation of fermionic Gaussian quantum channels: The fermionic private quantum channel} \label{subsec:fpqc}
In 2005, Bravyi first introduced the notion of fermionic Gaussian quantum channels as follows~\cite{B05}: For any completely positive and trace-preserving map, the fermionic Gaussian channel $\Lambda_F$ is given by
\begin{align}
\Lambda_F(\hat{c}_k)&=\xi_k\hat{c}_k,\;\;\;\forall k=1,\ldots,2M\;\;\;\tn{and} \nonumber\\
\Lambda_F(\hat{\mathbf{c}}(\vec{b}))&=\prod_{k;b_k=1}\xi_k\hat{\mathbf{c}}(\vec{b}),
\end{align}
where $\hat{\mathbf{c}}(\vec{b})=\hat{c}_1^{b_1}\hat{c}_2^{b_2}\cdots\hat{c}_{2M}^{b_{2M}}$ with a binary string $\vec{b}=(b_1,\ldots,b_{2M})$. Here we note that $0\le \xi_1,\ldots,\xi_{2M}\le1$ are real parameters characterizing the fermionic quantum channels and it is called the \emph{attenuation} coefficient. Now, we are ready to define the fermionic private quantum channel in the form of Kraus representation. For any fermionic Gaussian state $\rho_F$, the fermionic Gaussian channel $\Lambda_F$ is described by
\begin{eqnarray}
\Lambda_F(\rho_F)&=\frac{1}{|\cl{U}|}\sum_{\ell=1}^{|\cl{U}|}U_\ell\rho_FU_\ell^\dag, 
\end{eqnarray}
where $U_\ell=i\pi\hat{c}_\ell$ such that $U_\ell\hat{c}_m=(-1)^{\delta_{\ell m}}\hat{c}_mU_\ell$ with $\pi=(-1)^M\hat{c}_1\cdots\hat{c}_{2m}$. Note that $|\cl{U}|$ denotes the cardinality of the unitaries on the unitary group $\cl{U}$. In the optimal case, the cardinality of $\cl{U}$ is given by $|\cl{U}|=(2M)^2$. (See Fig.~\ref{Fig1}.)

\begin{figure}[t!]
\center
\includegraphics[width=6.5cm]{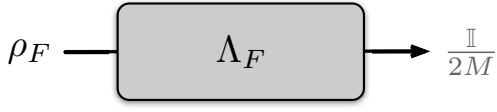}
    \caption{Schematic diagram for $M$-mode fermionic private quantum channel $\Lambda_F$. For any fermionic Gaussian state $\rho_F$, if an output of the channel is $\frac{\1}{2M}$, then we call the channel as \emph{perfact} fermionic PQC. Otherwise, i.e., the channel's output is almost close to $\frac{\1}{2M}$, then the channel is called as fermionic $\varepsilon$-PQC or $\varepsilon$-FPQC.}
\label{Fig1}
\end{figure}

\begin{definition}[Fermionic $\varepsilon$-private quantum channel]
For any fermionic (Gaussian) state $\rho_F$ and any $\varepsilon>0$, if a fermionic Gaussian quantum channel $\Lambda_F:\cl{H}^{\otimes M}\to\cl{H}^{\otimes M}$ satisfies
\begin{equation}
\left\|\Lambda_F(\rho_F)-\frac{\1}{2M}\right\|_p\le\frac{\varepsilon}{\sqrt[p]{d^{p-1}}},
\end{equation}
then the channel $\Lambda_F(\cdot)$ is said to be fermionic $\varepsilon$-\emph{private quantum channel} (or $\varepsilon$-FPQC) with respect to the Schatten $p$-norm (for all $p\ge1$).
\end{definition}

For the case of $p=1$, $\varepsilon$-FPQCs are taken in the form of $\left\|\Lambda_F(\rho_F)-\frac{\1}{2M}\right\|_1\le{\varepsilon}$.
Then, we are ready to state our main result of the approximate fermionic private quantum channel for randomizing fermionic Gaussian quantum states.

\section{Main results} \label{sec:main}
We have briefly reviewed the concrete notions on fermionic Gaussian systems and the definition of the approximate private quantum channels on fermionic systems, so now we will introduce our main results. While the proposed results are simple, the proofs are subtle complicated. However, the statements have a fundamental meaning in quantum communication theory on whether the fermionic Gaussian systems are tractable in quantum channel capacity problems or not. If we find an explicit form of the fermionic PQC similar to the bosonic PQCs, e.g., in Refs.~\cite{Br05,JKL15}, we can further argue on the topic of the quantum channel capacity problems as well as its non-additive properties.

According to the Hayden \emph{et al.}'s~\cite{HLSW04}, Dickinson and Nayak's~\cite{DN06}, and also the Author's previous result~\cite{J14,J19}, we suggest a following proposition.

\begin{proposition}[Approximate fermionic PQC] Let $\rho_F$ be an $M$-mode fermionic Gaussian state, and $\Lambda_F(\rho_F)=\frac{1}{|\cl{U}|}\sum_{\ell=1}^{|\cl{U}|}U_\ell\rho_F U_\ell^\dag$ be an $\varepsilon$-FPQC with respect to the Schatten $p$-norm. Then, for any $\varepsilon>0$ and for sufficiently large $M$, there exists a set of fermionic unitary operators $\{U_\ell=i\pi\hat{c}_\ell\}_{\ell=1}^{|\cl{U}|}$ in $\cl{U}(2M)$ with the cardinality at least
\begin{equation}
|\cl{U}|\ge2\kappa M\log\frac{10(2M)^{(p-1)/p}}{\varepsilon},
\end{equation}
where $\kappa$ is an absolute constant.
\label{prop1}
\end{proposition}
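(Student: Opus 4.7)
The plan is to adapt the probabilistic strategy of Hayden--Leung--Shor--Winter, in its Schatten $p$-norm refinement due to Dickinson--Nayak and the Author, to the fermionic Gaussian setting. Concretely I would (i) draw the unitaries $U_\ell$ independently and uniformly from an ensemble of fermionic Gaussian ``Pauli'' operators built out of products of Majorana modes $i\pi\hat{\mathbf{c}}(\vec{b})$, (ii) establish measure concentration of $\Lambda_F(\rho_F)$ around $\1/(2M)$ for a single fixed fermionic Gaussian state, (iii) cover the continuous family of such states by an explicit $\varepsilon$-net, and (iv) close by a union bound over the net.

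For step (ii) I would expand every even Gaussian state in the Clifford basis,
\[
\rho_F = \frac{1}{2M}\sum_{\vec{b}}\alpha_{\vec{b}}\,\hat{\mathbf{c}}(\vec{b}),
\]
and exploit the anti-commutation $U_\ell\hat{c}_m = (-1)^{\delta_{\ell m}}\hat{c}_m U_\ell$ already recorded in the excerpt: each random $U_\ell$ acts on the non-identity Clifford elements by independent $\pm 1$ signs, so in expectation $\bb{E}[\Lambda_F(\rho_F)] = \1/(2M)$. Deviations from this mean are then controlled by the operator Chernoff (Ahlswede--Winter) inequality applied to the self-adjoint summands $X_\ell = U_\ell\rho_F U_\ell^\dag - \1/(2M)$: since $\|X_\ell\|_\infty\le 1$ and the second moment is of order $1/(2M)$, one obtains a pointwise tail bound of the form $\Pr\!\left[\|\Lambda_F(\rho_F)-\1/(2M)\|_\infty > \eta\right]\le 2(2M)\exp(-\kappa'|\cl{U}|\eta^2)$.

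To land in the Schatten $p$-norm used in the definition, I would invoke the interpolation $\|X\|_p \le d^{1/p}\|X\|_\infty$ with $d=2M$, so that setting $\eta \simeq \varepsilon/(2M)^{1/p}$ delivers $\|\Lambda_F(\rho_F)-\1/(2M)\|_p\le \varepsilon/(2M)^{(p-1)/p}$. For step (iii), the key remark is that a fermionic even Gaussian state is fully specified by its $2M\times 2M$ antisymmetric covariance matrix, i.e.\ by an $SO(2M)$ rotation together with $M$ spectral values $\lambda_k\in[0,1]$; restricting the net to this Gaussian-adapted parametrization, rather than to the full $2^M$-dimensional density-operator manifold, keeps its log-cardinality linear in $M$ (up to logarithmic factors in $M/\varepsilon$).

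Putting the pieces together, requiring the failure event to have probability strictly less than one at every net point yields $|\cl{U}|\,\eta^2 \gtrsim M\log\!\left(10(2M)^{(p-1)/p}/\varepsilon\right)$, which rearranges to the claimed cardinality bound with an absolute constant $\kappa$. The chief obstacle, parallel to the bosonic Gaussian proofs in Refs.~\cite{Br05,JKL15}, is step (iii): a naive $\varepsilon$-net on the full operator manifold $\cl{B}(\bb{C}^{2^M})$ is doubly exponential in $M$ and would immediately destroy the $M\log M$ scaling. The Gaussian structure must therefore be used to compress the state-parameter space to one of dimension polynomial --- in fact linear --- in $M$, and verifying that the pointwise concentration of step (ii) transfers uniformly along such a Gaussian-adapted net is the technical heart of the argument.
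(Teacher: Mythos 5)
Your overall architecture (random fermionic unitaries, concentration for a fixed state, a Gaussian-adapted net whose log-cardinality is linear in $M$, then a union bound) matches the paper's, and your worry in step (iii) is exactly what the paper's $\varepsilon$-net lemma resolves: it takes a net of \emph{pure fermionic Gaussian} states with $\|N\|\le(5/\varepsilon)^{4M}$, so the net is not the obstacle. The genuine gap is in step (ii), the concentration tool. The paper (following Hayden--Leung--Shor--Winter) applies McDiarmid's bounded-difference inequality to the \emph{scalar} random variable $Y_{\varphi_F}=\|\Lambda_F(\varphi_F)-\frac{\1}{2M}\|_1$, whose bounded difference is $2/|\cl{U}|$, obtaining a tail of the form $e^{-|\cl{U}|t^2/2}$ with $t\sim\varepsilon$; combined with the net of log-size $\Theta(M\log(M/\varepsilon))$ this yields the near-linear cardinality $|\cl{U}|\gtrsim M\log(10/\varepsilon)/\varepsilon^2$. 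Your Ahlswede--Winter operator-Chernoff route instead controls $\|\Lambda_F(\varphi_F)-\frac{\1}{2M}\|_\infty$ at scale $\eta$, with failure probability roughly $2(2M)\exp(-c|\cl{U}|\eta^2\cdot 2M)$ even after using the variance refinement you mention. Since the interpolation $\|X\|_p\le(2M)^{1/p}\|X\|_\infty$ forces $\eta\simeq\varepsilon/(2M)$ (see below), the exponent is only of order $|\cl{U}|\varepsilon^2/(2M)$, and the union bound over the $e^{\Theta(M\log(M/\varepsilon))}$-point net then demands $|\cl{U}|=\Omega(M^2\log(M/\varepsilon)/\varepsilon^2)$. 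That is no better than the exact construction with $(2M)^2$ unitaries and cannot reproduce the claimed $|\cl{U}|\ge2\kappa M\log\bigl(10(2M)^{(p-1)/p}/\varepsilon\bigr)$; the pass through the operator norm is intrinsically too lossy for $p$ near $1$, which is precisely the case the paper proves.

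There is also a calibration slip in your $p$-norm reduction: with $\eta\simeq\varepsilon/(2M)^{1/p}$ the bound $\|X\|_p\le(2M)^{1/p}\|X\|_\infty$ gives only $\|\Lambda_F(\rho_F)-\frac{\1}{2M}\|_p\le\varepsilon$, not the required $\varepsilon/(2M)^{(p-1)/p}$; to meet the definition you must take $\eta\simeq\varepsilon/(2M)$ for \emph{every} $p$, which is what drives the loss quantified above. To recover the stated scaling you should follow the paper's (and HLSW's) trace-norm strategy --- bound the expectation of $Y_{\varphi_F}$ separately (the paper inserts the term $\frac{2M}{|\cl{U}|}+\frac{1}{2M}$ for this) and then apply McDiarmid with bounded differences $2/|\cl{U}|$ --- rather than concentrating the operator itself; a minor further point is that the paper's ensemble is the discrete family $\{i\pi\hat{c}_\ell\}$ rather than the full set of products $i\pi\hat{\mathbf{c}}(\vec{b})$, so your expectation identity needs to be argued for that restricted family (or the average taken over which $\ell$'s are selected), not for the full Clifford-algebra twirl.
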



Here, we present a proof only on the case of $p=1$ as mentioned in Introduction. In this case, we can fix the logarithmic factor as $\log\frac{10}{\varepsilon}$, and from the independence of the mode $M$, it can be omitted. Also, notice that the cardinality could be reduced by $4M^2$ to $\cl{O}(2M\log2M)$ by the proposition~\ref{prop1}.

\begin{proposition}[$\varepsilon$-FPQC for $p=1$ case]
Let $\Lambda_F(\rho_F)=\frac{1}{|\cl{U}|}\sum_{\ell=1}^{|\cl{U}|}\pi\hat{c}_\ell\rho_F (\pi\hat{c}_\ell)^\dag$ be an $\varepsilon$-FPQC with respect to the trace norm. Then, for any $\varepsilon>0$ and $M\gg1$, there exists a set of Majorana operators $\{i\pi\hat{c}_\ell\}_{\ell=1}^{|\cl{U}|}$ in $\cl{U}(2M)$ with the cardinality of
\begin{equation}
|\cl{U}|\ge2\kappa M,
\end{equation}
where $\kappa$ is also an absolute constant as in Proposition~\ref{prop1}.
\label{prop2}
\end{proposition}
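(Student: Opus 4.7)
The natural approach is to mimic the Hayden--Leung--Shor--Winter (HLSW) operator-Chernoff and $\varepsilon$-net strategy and specialise it to the fermionic Gaussian setting. First I would verify that the full optimal set of $(2M)^2$ unitaries $\{i\pi\hat c_\ell\}$ implements the \emph{perfect} FPQC: using the anti-commutation $U_\ell\hat c_m=(-1)^{\delta_{\ell m}}\hat c_m U_\ell$, uniform conjugation averaging over this set annihilates every non-identity Majorana monomial appearing in the Fock expansion of $\rho_F$, leaving only the $\mathbb{1}/(2M)$ term. This pins down the mean of the randomised channel, so that subsampling becomes a well-defined variance-reduction problem.

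Next, I would draw $N$ unitaries $\{U_{\ell_i}\}_{i=1}^N$ uniformly and independently from the optimal set, define $\tilde\Lambda_F(\rho_F)=\tfrac{1}{N}\sum_i U_{\ell_i}\rho_F U_{\ell_i}^\dag$, and view the deviation $\Delta_F:=\tilde\Lambda_F(\rho_F)-\mathbb{1}/(2M)$ as an i.i.d.\ average of bounded, zero-mean Hermitian operators. A matrix Chernoff / Ahlswede--Winter bound then yields an exponential tail of the form $\mathbb{P}[\|\Delta_F\|_\infty\ge t]\le 4M\exp(-c\,N t^2)$ for each fixed Gaussian $\rho_F$. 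The standard HLSW duality trick $\|\Delta_F\|_1=\sup_{\|W\|_\infty\le 1}|\T(\Delta_F W)|$, combined with a finite $\varepsilon$-net over rank-one test observables, then converts the operator-norm estimate into the required trace-norm estimate at the cost of a logarithmic factor in the net size.

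The step that actually delivers the $\mathcal{O}(M)$ cardinality, rather than the $\mathcal{O}(M\log M)$ one would get from a naive application to an arbitrary $2M$-dimensional input, is the \emph{restricted} covering: the FPQC condition need only hold on even Gaussian states. In the normal form $\rho_F=\bigotimes_{k=1}^M\rho_{F,k}$ of the preliminaries, these are parameterised by only the $M$ spectral numbers $\lambda_k\in[0,1]$, with the outer Gaussian unitary $O_\Gamma$ absorbed into a re-indexing of the Majorana sum. A net on the $M$-dimensional $\lambda$-simplex has log-cardinality linear in $M$, matching the concentration exponent $cNt^2$. Balancing the two terms, and specialising Proposition~\ref{prop1} to $p=1$ so that the logarithm $\log(10(2M)^{(p-1)/p}/\varepsilon)$ collapses to $\log(10/\varepsilon)$, gives $|\cl{U}|\ge 2\kappa M\log(10/\varepsilon)$; for fixed $\varepsilon$ and $M\gg 1$ the constant $\log(10/\varepsilon)$ is absorbed into the absolute constant $\kappa$, yielding the advertised $|\cl{U}|\ge 2\kappa M$.

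The main obstacle I anticipate is precisely the Gaussian-covariance reduction: one must show that the random channel $\tilde\Lambda_F$ transforms sufficiently well under conjugation by the Gaussian unitary group, so that the $\mathrm{SO}(2M)$ part of $\rho_F$ can be transferred to the unitary side without enlarging the effective parameter space from $M$ to $\mathcal{O}(M^2)$. If the covariance is only approximate for the specific generators $i\pi\hat c_\ell$, an auxiliary net on $\mathrm{SO}(2M)$ with log-cardinality kept at $\mathcal{O}(M)$ (rather than $\mathcal{O}(M^2)$) would still close the argument; in either scenario the delicate point is preserving a covering exponent that is \emph{linear}, not quadratic, in the mode number $M$.
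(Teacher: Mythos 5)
Your route is genuinely different from the paper's, and as written it does not close. The paper does not use a matrix Chernoff/Ahlswede--Winter bound, nor any ``restricted covering'' over the Gaussian normal form: it applies McDiarmid's bounded-difference inequality (Lemma~\ref{McD}) \emph{directly} to the trace-norm random variable $Y_{\varphi_F}=\|\Lambda_F(\varphi_F)-\tfrac{\1}{2M}\|_1$, using the bounded difference $2/|\cl{U}|$ obtained from norm convexity, combines this with the $\varepsilon$-net of Lemma~\ref{net} over \emph{all} pure fermionic states, whose cardinality $(5/\varepsilon)^{4M}$ already has log-cardinality linear in $M$ because the effective dimension of the problem is $d=2M$, and finishes with a union bound; balancing the exponent $|\cl{U}|\varepsilon^2$ against $4M\log(20M/\varepsilon)$ yields $|\cl{U}|\ge 2\kappa M$ with $\kappa=\tfrac{1}{c\varepsilon^2}\log(10/\varepsilon)$ (Lemma~\ref{upper bound}). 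In other words, the linear-in-$M$ count comes from working with the trace norm directly, Dickinson--Nayak style, not from shrinking the family of input states.

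The concrete gaps in your proposal are two. First, the duality step is incorrect as stated: $\|\Delta_F\|_1=\sup_{\|W\|_\infty\le1}|\T(\Delta_F W)|$ requires the supremum over all Hermitian contractions, and a net over \emph{rank-one} test observables only certifies $\|\Delta_F\|_\infty$; to pass from the operator norm to the trace norm you must either net the full contraction ball (log-cardinality $\Theta(M^2)$) or pay the factor $d=2M$ via $\|\Delta_F\|_1\le d\|\Delta_F\|_\infty$, and together with the dimensional prefactor in the Ahlswede--Winter tail this forces the $O(M\log M)$ count you were trying to avoid. Second, the device you invoke to recover linearity --- covering only the $\lambda$-simplex of the normal form $\rho_F=\bigotimes_k\rho_{F,k}$ --- does not suffice, because the $\varepsilon$-FPQC condition must hold for every even Gaussian state including its $\tn{SO}(2M)$ orbit, and you have no covariance argument transferring that orbit to the unitary side; your fallback, an auxiliary net on $\tn{SO}(2M)$ with log-cardinality $O(M)$, is impossible since $\tn{SO}(2M)$ has dimension $M(2M-1)$, so any $\varepsilon$-net has log-cardinality $\Omega(M^2)$. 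You correctly identified this as the delicate point, but it is exactly the step that is missing, whereas the paper's McDiarmid-plus-state-net argument never needs it.
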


For the proof we are required two technical lemmas. Below Lemma~\ref{net} states that pure quantum states on Bloch sphere on any dimension can be discretized into a net-point on the regularized polyhedron in the dimension, and Lemma~\ref{McD} endows us to estimate an exponentially decaying of the tale probability distribution on a random variable. Those lemmas are universal not only in the bosonic Gaussian system but also in the fermionic Gaussian one.

\begin{lemma}[$\varepsilon$-net~\cite{HLSW04}] \label{net}
Let $\varepsilon>0$ and the Majorana mode $M\gg1$. For any fermionic pure Gaussian states $\ket{\varphi_F}\in\cl{H}^M$, we can choose a net point $\ket{\tilde{\varphi}_F}\in N$ such that $\|\varphi_F-\tilde{\varphi}_F\|_1\le\varepsilon$. Then there exists a net $N$ of pure fermionic states satisfying
\begin{equation}
\|N\|\le\left(5/\varepsilon\right)^{4M}.
\end{equation}
\end{lemma}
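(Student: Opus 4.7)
The plan is to transplant the volumetric $\varepsilon$-net construction of Hayden \emph{et al.}~\cite{HLSW04} from the complex projective space onto the much smaller manifold of pure fermionic Gaussian states. The first step is to downgrade the trace-norm criterion to a Hilbert-norm one: the pure-state identity
\begin{equation*}
\|\varphi_F-\tilde\varphi_F\|_1=2\sqrt{1-|\inn{\varphi_F}{\tilde\varphi_F}|^2}\le 2\|\ket{\varphi_F}-\ket{\tilde\varphi_F}\|_2
\end{equation*}
lets us work with the Euclidean distance of state vectors at the price of a factor of $2$, so it suffices to construct an $(\varepsilon/2)$-net with respect to $\|\cdot\|_2$.

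Next I would exploit the Gaussian structure recalled in Subsec.~\ref{subsec:fermion}: every even pure fermionic Gaussian state is obtained from $\ket{0}^{\otimes M}$ by a Bogoliubov unitary $e^{i\hat H'}$ whose action on the Majorana operators is implemented by an orthogonal matrix $O_\Gamma\in\tn{SO}(2M)$. Packaging each such state through its single-particle (first-quantized) Majorana data embeds the ensemble into the unit sphere of a real auxiliary space of dimension $4M$. On that sphere I would run the standard volumetric packing argument: pick a maximal $(\varepsilon/2)$-separated family $\{\ket{\tilde\varphi_F^{(j)}}\}_j$; the open balls of radius $\varepsilon/4$ around its points are pairwise disjoint and contained in the concentric ball of radius $1+\varepsilon/4$, and comparing Euclidean volumes gives
\begin{equation*}
|N|\le\left(\frac{1+\varepsilon/4}{\varepsilon/4}\right)^{4M}=\left(1+\frac{4}{\varepsilon}\right)^{4M}\le\left(\frac{5}{\varepsilon}\right)^{4M}
\end{equation*}
valid for $0<\varepsilon\le 1$. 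Maximality promotes this packing to a covering of the sphere, which combined with the first step yields an $\varepsilon$-net in trace norm.

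The main obstacle I anticipate is the dimensional-reduction step. The naive ambient dimension of $\cl{H}^{\otimes M}$ scales as $2^M$, so a direct application of the HLSW bound would give the useless estimate $(5/\varepsilon)^{2^{M+1}}$. Pinning the effective real dimension down to $4M$ requires parametrizing pure Gaussian states through the Majorana covariance data and then checking that the induced $\ell_2$-metric in this low-dimensional picture is equivalent, up to universal constants, to the Hilbert-space metric on $\cl{H}^{\otimes M}$; the factor of $5$ in the bound then absorbs any such constants as in the bosonic-Gaussian analogue of~\cite{J14}.
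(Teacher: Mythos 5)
Your packing machinery is fine and is exactly the standard argument behind the cited result of Hayden \emph{et al.}: the paper itself gives no proof of this lemma beyond the citation to~\cite{HLSW04}, and the conversion $\|\varphi_F-\tilde{\varphi}_F\|_1\le 2\|\ket{\varphi_F}-\ket{\tilde{\varphi}_F}\|_2$, the maximal $(\varepsilon/2)$-separated set, and the volume ratio $(1+4/\varepsilon)^{n}\le(5/\varepsilon)^{n}$ are precisely how that lemma is obtained. The genuine gap is your dimensional-reduction step, which is also the only place where you depart from the cited argument. Read in the paper's own framework (the target state is $\1/2M$, and the net is later rescaled to $|N|\le(20M/\varepsilon)^{4M}$), the lemma is simply the HLSW net over \emph{all} pure states of an effectively $(2M)$-dimensional Hilbert space; the exponent $4M$ is then just $2\times 2M$, the real dimension of the ambient space of the unit sphere of $\bb{C}^{2M}$, and no restriction to the Gaussian submanifold is needed or invoked. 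You instead take the ambient space to be the $2^M$-dimensional Fock space and try to recover the exponent $4M$ by embedding pure Gaussian states into the unit sphere of a real $4M$-dimensional space through their ``single-particle Majorana data.''

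That embedding is never constructed, and as stated it cannot do the job. The manifold of $M$-mode pure even fermionic Gaussian states is the set of orthogonal complex structures, $\tn{SO}(2M)/\tn{U}(M)$, of real dimension $M(M-1)$; since covering numbers of a compact $k$-dimensional manifold scale as $(c/\varepsilon)^{k}$ in any metric equivalent to the one you cover in, a parametrization through covariance or Majorana data would generically yield an exponent of order $M^{2}$, not $4M$, so the claimed $(5/\varepsilon)^{4M}$ does not follow from this route. Moreover, the metric equivalence you need --- that Euclidean distance of the covariance-type data controls, up to universal constants absorbed into the factor $5$, the trace distance of the corresponding Fock-space vectors --- is exactly the nontrivial content of your reduction, and you leave it as ``the main obstacle I anticipate,'' i.e., the crux of your proof is missing. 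To match the statement as the paper uses it, drop the detour entirely and apply the volumetric net to the pure states of $\bb{C}^{2M}$ (the paper's effective dimension), which gives the exponent $4M$ immediately; if you insist on the full Fock-space reading with a Gaussian-submanifold net, the bound itself would have to be re-derived and would not take the stated form.
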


\begin{lemma}[McDiarmid inequality~\cite{M89}]
Let $\{X_k\}_{k=1}^m\subset{\cl{S}}$ be independent random variables chosen uniformly at random. Let a measurable function $F:\cl{S}^m\to\bb{R}$ satisfy $|F(x)-F(\hat{x})|\le c_k$, called the bounded difference, where the vectors $x$ and $\hat{x}$ differ only in the $k$-th position. If we define a random variable $Y=F(X_1,\ldots,X_m)$, then ($\forall t\ge0$)
\begin{equation}
\tn{Pr}[|Y-\mathbf{E}(Y)|\ge t]\le2e^{-2t^2/\sum_{k=1}^mc_k^2},
\end{equation}
where $\mathbf{E}(Y)$ denotes the expectation value for the random variable $Y$.
\label{McD}
\end{lemma}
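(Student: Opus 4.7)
The plan is to prove McDiarmid's inequality by the standard Doob-martingale reduction to the Azuma--Hoeffding inequality. First, I would introduce the filtration $\mathcal{F}_k=\sigma(X_1,\ldots,X_k)$ (with $\mathcal{F}_0$ trivial) and define the Doob martingale
\begin{equation}
Y_k:=\mathbf{E}[Y\mid\mathcal{F}_k],\qquad k=0,1,\ldots,m,
\end{equation}
so that $Y_0=\mathbf{E}(Y)$ and $Y_m=Y$. The martingale differences $D_k:=Y_k-Y_{k-1}$ then telescope to $Y-\mathbf{E}(Y)=\sum_{k=1}^m D_k$, reducing the task to a concentration bound for a sum of bounded martingale increments.

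Next, I would exploit the bounded-differences hypothesis to control the conditional range of each $D_k$. Because the $X_i$ are independent, conditioning on $\mathcal{F}_{k-1}$ allows the integration over $(X_{k+1},\ldots,X_m)$ to be performed separately for each admissible value of $X_k$. Concretely, I would set
\begin{equation}
L_k:=\inf_{x}\mathbf{E}[Y\mid X_1,\ldots,X_{k-1},X_k=x]-Y_{k-1},
\end{equation}
\begin{equation}
U_k:=\sup_{x}\mathbf{E}[Y\mid X_1,\ldots,X_{k-1},X_k=x]-Y_{k-1},
\end{equation}
and show that $L_k\le D_k\le U_k$ almost surely with $U_k-L_k\le c_k$. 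The last inequality follows by coupling: for two arguments differing only in the $k$-th slot, the bounded-differences hypothesis gives a pointwise gap of at most $c_k$, and integrating out the remaining independent coordinates preserves this gap.

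Having the increments in an interval of length $c_k$ conditionally on $\mathcal{F}_{k-1}$, I would invoke Hoeffding's lemma to bound the conditional moment generating function,
\begin{equation}
\mathbf{E}\bigl[e^{\lambda D_k}\mid\mathcal{F}_{k-1}\bigr]\le\exp\!\left(\frac{\lambda^2 c_k^2}{8}\right),\qquad\lambda\in\mathbb{R}.
\end{equation}
Iterating this bound via the tower property yields $\mathbf{E}\,e^{\lambda(Y-\mathbf{E}Y)}\le\exp(\lambda^2\sum_k c_k^2/8)$, after which a Chernoff--Markov step and optimization over $\lambda$ produce the one-sided tail $\mathrm{Pr}[Y-\mathbf{E}(Y)\ge t]\le e^{-2t^2/\sum_k c_k^2}$. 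Applying the same argument to $-Y$ (which inherits the same bounded-differences constants) and combining by a union bound supplies the symmetric version with the factor $2$ in front.

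The main obstacle I anticipate is the bounded-increments step: one must argue carefully that the conditional supremum and infimum above are genuinely attained up to the gap $c_k$, which uses independence of the coordinates in an essential way (for non-product measures McDiarmid's inequality can fail). The Hoeffding-lemma bound and the iteration are then routine, so once the martingale-difference control is rigorous the remainder of the proof is mechanical.
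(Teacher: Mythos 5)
The paper does not prove this lemma at all: it is quoted as a known result with a citation to McDiarmid's survey, so there is no in-paper proof to compare against. Your proposal is the standard (and correct) argument — the Doob-martingale decomposition, the conditional range bound $U_k-L_k\le c_k$ via independence, Hoeffding's lemma for the conditional moment generating function, the Chernoff step with optimization over $\lambda$, and a union bound for the two-sided statement — which is essentially the proof given in the cited reference itself. The one step you rightly flag as delicate, controlling $D_k$ within an interval of length $c_k$ using the product structure of the law of $(X_1,\ldots,X_m)$, is handled correctly, so the argument is complete.
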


In fermionic Gaussian regimes, suppose that the fermionic PQC $\Lambda_F$ is realized by a sequence of the Majorana operators $(i\pi\hat{c}_k)_{k=1}^{|\cl{U}|}$, and the other map ${\Lambda}_F'$ is given by $(i\pi\hat{c}_1,\ldots,i\pi\hat{c}_k',\ldots,i\pi\hat{c}_{|\cl{U}|})$, respectively. Then we have the \emph{bounded} difference as
\begin{align*}
&\left|\left\|\Lambda_F(\varphi_F)-\frac{\1}{2M}\right\|_1-\left\|\Lambda_F'(\varphi_F)-\frac{\1}{2M}\right\|_1\right| \\
&~~~~~~~~~~~~~~~~~~~~~~~~~~~~~~\le\left\|\Lambda_F(\varphi_F)-\Lambda_F(\varphi_F)'\right\|_1 \\
&~~~~~~~~~~~~~~~~~~~~~~~~~~~~~~=\frac{1}{|\cl{U}|}\left\|\pi\hat{c}_k\varphi_F(\pi\hat{c}_k)^\dag-\pi\hat{c}_k'\varphi_F(\pi\hat{c}_k')^{\dag}\right\|_1 \\
&~~~~~~~~~~~~~~~~~~~~~~~~~~~~~~\le\frac{2}{|\cl{U}|},
\end{align*}
where we make use of the norm convexity and the fact of $\|\phi-\phi'\|_1\le2$ for any quantum states. From the McDiarmid inequality (on the positive part), we estimate that
\begin{equation}
\tn{Pr}\left[Y_{\varphi_F}\ge t+\left(\frac{2M}{|\cl{U}|}+\frac{1}{2M}\right)\right]\le e^{-|\cl{U}|t^2/2},
\end{equation}
where $Y_{\varphi_F}:=\|\Lambda_F(\varphi_F)-\frac{\1}{2M}\|_1$.

Now, we are ready to prove the Proposition~\ref{prop2}. That is, $\varepsilon$-FPQC is fulfilled when we take the fermionic unitary operators as in the order of the cardinality $|\cl{U}|$.

\begin{proof}
Let the set of Majorana operators $\{i\pi\hat{c}_{k}\}_{k=1}^{|\cl{U}|}$ be an i.i.d. random variable distributed according to the Haar measure. We can prove that the fernionic map $\Lambda_F$ is the $\varepsilon$-FPQC in high probability.

If we fix the net $N$ in Lemma~\ref{net}, and define $\tilde{\varphi}_F$ to be a net point on the fermionic pure Gaussian states $\varphi_F$. Then, by the unitary invariance, we can conclude that
\begin{equation} \label{norm inv}
\|\Lambda_F(\varphi_F)-\Lambda_F(\tilde{\varphi}_F)\|_1=\|\varphi_F-\tilde{\varphi}_F\|_1\le\frac{\varepsilon}{2}.
\end{equation}
Thus, from the $\varepsilon$-net lemma, we can obtain the net with the cardinality $|N|\le(20M/\varepsilon)^{4M}$. This implies that
\begin{align}
&\tn{Pr}_{\forall\varphi_F}\left[\left\|\Lambda_F(\varphi_F)-\frac{\1}{2M}\right\|_1\ge\varepsilon\right] \nonumber\\
&\le\tn{Pr}_{\forall\varphi_F,\tilde{\varphi}_F}\left[\|\Lambda_F(\varphi_F)-\Lambda_F(\tilde{\varphi}_F)\|_1+\left\|\Lambda_F(\tilde{\varphi_F})-\frac{\1}{2M}\right\|_1\ge\varepsilon\right] \\
&\le\tn{Pr}_{\forall\tilde{\varphi}_F}\left[\left\|\Lambda_F(\tilde{\varphi}_F)-\frac{\1}{2M}\right\|_1\ge\frac{\varepsilon}{2}\right],\;\;\;(*) \nonumber
\end{align}
where we have used the triangle inequality and Eq.~(\ref{norm inv}).

Finally, by using the union bound and the net construction above, we can derive following inequalities:
\begin{align}
(*)&\le|N|\cdot\tn{Pr}_{\forall_{\tilde{\varphi}_F^{(1)}}}\left[\left\|\Lambda_F(\tilde{\varphi}_F^{(1)})-\frac{\1}{2M}\right\|_1\ge\frac{\varepsilon}{2}\right] \nonumber\\
&\le 2\left(\frac{20M}{\varepsilon}\right)^{4M}\exp\left[-|\cl{U}|\left(\frac{\varepsilon}{4M}-\frac{(2M)^{1/2M}}{|\cl{U}|}-\frac{1}{2M}\right)^2\right].
\label{finalprob}
\end{align}
This completes the proof if the probability is bounded by 1 (see Lemma~\ref{upper bound} below), and $|\cl{U}|\ge2\kappa M$ with $\kappa:=\frac{1}{c\varepsilon^2}\log\left(\frac{10}{\varepsilon}\right)$ for a constant $c$.
\end{proof}

\begin{lemma}\label{upper bound}
For sufficiently large $M$, if we take the cardinality as in the form of
\begin{equation}
|\cl{U}|\ge2M\frac{1}{c\varepsilon^2}\log\left(\frac{10}{\varepsilon}\right),
\end{equation}
then the probability we required in Eq.~(\ref{finalprob}) is upper bounded by 1.
\end{lemma}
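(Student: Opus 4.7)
My plan is to reduce the statement to elementary algebra by taking logarithms on both sides of Eq.~(\ref{finalprob}). Writing the parenthesised quantity as
$$q \;:=\; \frac{\varepsilon}{4M} - \frac{(2M)^{1/(2M)}}{|\cl{U}|} - \frac{1}{2M},$$
the requirement that the bound be at most $1$ is equivalent to the scalar inequality
$$|\cl{U}|\,q^2 \;\ge\; \log 2 + 4M\log\!\left(\frac{20M}{\varepsilon}\right),$$
which I would verify under the hypothesis $|\cl{U}|\ge 2\kappa M$ with $\kappa=(c\varepsilon^2)^{-1}\log(10/\varepsilon)$.

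First, I would pin down the asymptotic size of $q$ for $M\gg 1$. Because $(2M)^{1/(2M)} = \exp((\log 2M)/(2M)) = 1 + O(\log M / M) \to 1$, and because the prescribed scaling $|\cl{U}| = \Theta(M\varepsilon^{-2}\log(10/\varepsilon))$ makes the middle subtractive term $o(1/M)$, the bracket $q$ reduces, to leading order, to a fixed positive multiple of $1/M$ (with the precise coefficient absorbed into the universal constant $c$ at the end). Substituting $|\cl{U}| = 2\kappa M$ then produces a left-hand side of order $\kappa\varepsilon^2/M \sim \log(10/\varepsilon)/(cM)$ times a fixed factor; I would then verify that, after choosing $c$ small enough, this comfortably exceeds the right-hand side $\le 5M\log(20M/\varepsilon)$ for all sufficiently large $M$.

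The main obstacle I anticipate is the apparent tension between the linear-in-$M$ cardinality $|\cl{U}|\sim M$ permitted by the statement and the $M\log M$ growth of the $\varepsilon$-net prefactor $(20M/\varepsilon)^{4M}$. Resolving it depends crucially on $q$ staying bounded below by the right multiple of $1/M$, which in turn requires the subtractive corrections $(2M)^{1/(2M)}/|\cl{U}|$ and $1/(2M)$ to be strictly dominated by $\varepsilon/(4M)$ — a constraint that forces a careful calibration of $c$ (and implicitly of the $M\gg 1$ assumption) to absorb the residual ratios $\log(20M/\varepsilon)/\log(10/\varepsilon)$. Tracking these three competing terms in the bracket is the crux of the bookkeeping; once it is done, the lower bound $|\cl{U}|\ge 2\kappa M$ drops out by direct substitution and the probability in Eq.~(\ref{finalprob}) is at most $1$, as claimed.
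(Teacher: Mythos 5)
Your reduction (take logarithms and require $|\mathcal{U}|\,q^2 \ge \log 2 + 4M\log(20M/\varepsilon)$) is the right first move, but the verification you sketch afterwards cannot work, and it is exactly the step you defer as ``bookkeeping.'' With the bracket taken literally from Eq.~(\ref{finalprob}), $q=\frac{\varepsilon}{4M}-\frac{(2M)^{1/(2M)}}{|\mathcal{U}|}-\frac{1}{2M}$ is of order $1/M$ (in fact, for any $\varepsilon<2$ one has $\frac{\varepsilon}{4M}<\frac{1}{2M}$, so $q<0$ and your claim that it is a ``fixed positive multiple of $1/M$'' is wrong in sign). Substituting $|\mathcal{U}|=2\kappa M$ then gives $|\mathcal{U}|\,q^2=O(1/M)$, a quantity tending to zero, while the right-hand side $\log 2+4M\log(20M/\varepsilon)$ tends to infinity like $M\log M$. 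No choice of the constant $c$ (which only rescales $\kappa$, leaving the $1/M$ decay intact) can make a vanishing quantity exceed a divergent one, so the ``apparent tension'' you identify is not resolvable along the route you propose: with a linear-in-$M$ cardinality and a bracket of size $\Theta(1/M)$, the exponential can never beat the $(20M/\varepsilon)^{4M}$ net prefactor.

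The paper's own proof closes this not by sharper estimation but by effectively replacing the bracket: it works with $\left(\frac{\varepsilon}{2}-\frac{2(2M)^{1/(2M)}}{|\mathcal{U}|}\right)^2$, i.e.\ a deviation of order $\varepsilon$ independent of $M$ (as comes out of the McDiarmid step with $t=\varepsilon/2-\mathbf{E}(Y)$), chooses parameters so the subtractive correction is $o(\varepsilon)$, and then the negativity of the logarithm of Eq.~(\ref{finalprob}) reduces to $2M\log(10/\varepsilon)<c\,|\mathcal{U}|\varepsilon^2$, which is the hypothesis of the lemma (the residual $\log(20M/\varepsilon)$ versus $\log(10/\varepsilon)$ discrepancy is absorbed by the paper's earlier remark that the mode-dependence of the logarithm is omitted). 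If you want to salvage your argument, you must first reconcile the exponent in Eq.~(\ref{finalprob}) with the concentration inequality it came from, so that $q$ is bounded below by a positive constant times $\varepsilon$ rather than by something of order $1/M$; only then does $|\mathcal{U}|\gtrsim M\varepsilon^{-2}\log(10/\varepsilon)$ suffice, and the comparison you want to make becomes $\Theta(M\log(1/\varepsilon))$ versus $\Theta(M\log M)$ instead of $\Theta(1/M)$ versus $\Theta(M\log M)$.
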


\begin{proof}
For sufficiently large $|\cl{U}|$ satisfying $2M<|\cl{U}|<(2M)^2$, we can take the bound as
\begin{equation}
2\left(\frac{20M}{\varepsilon}\right)^{4M}\exp\left[-|\cl{U}|\left(\frac{\varepsilon}{4M}-\frac{(2M)^{1/2M}}{|\cl{U}|}-\frac{1}{2M}\right)^2\right]<1.
\end{equation}
By the straightforward calculation, this bound gives rise to $|\cl{U}|\le4M^2$ we expected. Here, if we fix the mode $M$ and choose $2M$ so that $\left(\frac{\varepsilon}{2}-\frac{2(2M)^{1/2M}}{|\cl{U}|}\right)^2=o(\varepsilon^2)$, then we have
\begin{equation*}
2M\log\left(\frac{10}{\varepsilon}\right)<c|\cl{U}|\varepsilon^2.
\end{equation*}
\end{proof}

This construction shown to be that it is possible to construct an approximate fermionic private quantum channel using the fermionic unitaries only within cardinality $|\cl{U}|\ge2\kappa M$, beside $M^2$ in the optimal case. If we take a quantum purification, which describes that any mixed state can be transformed into a higher dimensional pure state, then we can always create a pure entangled state on fermionic Gaussian systems, for example in Ref.~\cite{TRCG19}.

\bigskip
\section{Conclusion}\label{sec:conclusion}
In this paper, we have firstly proposed an approximate private quantum channel for the fermionic Gaussian systems namely, $\varepsilon$-FPQC, and we make a useful formula to construct the quantum channel explicitly including its cardinality of needing unitary operations. While the fermionic PQC is needed exactly $4M^2$ fermionic unitary operations to encrypt an $M$-mode fermionic Gaussian state, our $\varepsilon$-FPQC is only sufficient to consume the number of unitary operations about $O(M\log M)$ in Proposition~\ref{prop2}. 

Beyond the bosonic Gaussian quantum channels, we expect that this kind of a research on fermionic channels will be meaningful for deep understanding of the quantum channel capacity problems i.e., the additivity violations for general capacities in broad Gaussian regimes. That is, if we know the exact form of a quantum purified state, which has a fermionic maximal entanglement, a research on the channel capacity problems could be also useful.

Finally, we remark on a relation between our approximate fermionic private quantum channel ($\varepsilon$-FPQC) and other information-theoretic settings. For examples, Clifford group $\cl{C}_n$ for $n$-qubit states is a subgroup of the unitary group $\cl{U}(d)$ for a qudit (that is, $d$-dimensional quantum state). Thus, we can easily observe that the above construction (i.e., $\varepsilon$-FPQC) has very similar structure to the $n$-qubit secure protocol for quantum sequential transmission~\cite{JK15}, magic-state construction~\cite{HJK+19}, and an operator system in mathematics for the error correction schemes in qubit levels~\cite{LNPST11}.

\section{Acknowledgments}
This work was supported by the National Research Foundation of Korea through a grant funded by the Ministry of Education (NRF-2018R1D1A1B07047512) and the Ministry of Science and ICT (NRF-2017R1E1A1A03070510).

%


\begin{thebibliography}{24}%
%
\bibitem{H16}
M. Hayashi, 
``Quantum Information Theory: Mathematical Foundation," 
Springer, New York (2016).
%
\bibitem{W17} 
M. M. Wilde,
``Quantum Information Theory,"
Cambridge University Press (2017).
%
\bibitem{W18}
J. Watrous, 
``The Theory of Quantum Information," 
Cambridge University Press (2018).
%
\bibitem{WPG+12}
C. Weedbrook, S. Pirandola, R. Garc\'{i}a-Patr\'{o}n, N. J. Cerf, T. C. Ralph, J. H. Shapiro, and S. Lloyd,
``Gaussian quantum information," 
\href{http://dx.doi.org/10.1103/RevModPhys.84.621}{
\rmp~\textbf{84}, 621 (2012)}.
%
\bibitem{B05}
S. Bravyi,
``Classical capacity of fermionic product channels,"
\href{https://arxiv.org/abs/quant-ph/0507282}{
arXiv:quant-ph/0507282v1 (2005)}.
%
\bibitem{SSGK18}
C. Spee, K. Schwaiger, G. Giedke, and B. Kraus,
``Mode entanglement of Gaussian fermionic states," 
\href{http://dx.doi.org/10.1103/PhysRevA.97.042325}{
\pra~\textbf{97}, 042325 (2018)}.
%
\bibitem{H09}
M. B. Hastings,
``Superadditivity of communication capacity using entangled inputs," 
\href{http://dx.doi.org/10.1038/nphys1224}{
Nat. Phys.~\textbf{5}, 255 (2009)}.
%
\bibitem{HW08}
P. Hayden and A. Winter,
``Counterexamples to the Maximal $p$-Norm Multiplicativity Conjecture for all $p>1$," 
\href{http://dx.doi.org/10.1007/s00220-008-0624-0}{
Commun. Math. Phys.~\textbf{284}, 263 (2008)}.
%
\bibitem{AMTW00}
A. Ambainis, M. Mosca, A. Tapp, and R. de Wolf,
``Private quantum channels,"
Proceedings 41st Annual Symposium on Foundations of Computer Science p. 547 (2000).
%
\bibitem{NK06}
D. Nagaj and I. Kerenidis,
``On the optimality of quantum encryption schemes," 
\href{http://dx.doi.org/10.1063/1.2339014}{
J. Math. Phys.~\textbf{47}, 092102 (2006)}.
%
\bibitem{BZ07}
J. Bouda and M. Ziman,
``Optimality of private quantum channels," 
\href{http://dx.doi.org/10.1088/1751-8113/40/20/011}{
J. Phys. A: Math. Theor.~\textbf{40}, 5415 (2007)}.
%
\bibitem{CWY04}
N. Cai, A. Winter, and R. W. Yeung,
``Quantum privacy and quantum wiretap channels," 
\href{http://dx.doi.org/10.1007/s11122-005-0002-x}{
Probl. Inf. Transm.~\textbf{40}, 318 (2004)}.
%
\bibitem{D05}
I. Devetak,
``The private classical capacity and quantum capacity of a quantum channel," 
\href{http://dx.doi.org/10.1109/TIT.2004.839515}{
IEEE Tran. Inf. Theory~\textbf{51}, 44 (2005)}.
%
\bibitem{H15}
M. Hayashi,
``Quantum Wiretap Channel With Non-Uniform Random Number and Its Exponent and Equivocation Rate of Leaked Information," 
\href{http://dx.doi.org/10.1109/TIT.2015.2464215}{
IEEE Tran. Inf. Theory~\textbf{61}, 5595 (2015)}.
%
\bibitem{HLSW04}
P. Hayden, D. Leung, P. W. Shor, and A. Winter,
``Randomizing Quantum States: Constructions and Applications," 
\href{http://dx.doi.org/10.1007/s00220-004-1087-6}{
Commun. Math. Phys.~\textbf{250}, 371 (2004)}.
%
\bibitem{J14}
K. Jeong,
``Randomizing quantum states to Schatten $p$-norm for all $p\ge1$," 
\href{http://dx.doi.org/10.1063/1.4903127}{
AIP Conf. Proc.~\textbf{1633}, 171 (2014)}.
%
\bibitem{J19}
K. Jeong,
``Randomizing channels in quantum information theory,"
PhD thesis, Seoul National University (2012).
%
\bibitem{AS04}
A. Ambainis and A. Smith,
``Small Pseudo-random Families of Matrices: Derandomizing Approximate Quantum Encryption," 
\href{http://dx.doi.org/10.1007/978-3-540-27821-4_23}{
Proceedings of RANDOM 2004, pp. 249--260 (2004)}.
%
\bibitem{DN06}
P. A. Dickinson and A. Nayak,
``Approximate Randomization of Quantum States With Fewer Bits of Key," 
\href{http://dx.doi.org/10.1063/1.2400876}{
AIP Conf. Proc.~\textbf{864}, 18 (2006)}.
%
\bibitem{Br05}
K. Br\'{a}dler,
``Continuous-variable private quantum channel," 
\href{http://dx.doi.org/10.1103/PhysRevA.72.042313}{
\pra~\textbf{72}, 042313 (2005)}.
%
\bibitem{JKL15}
K. Jeong, J, Kim, and S.-Y. Lee,
``Gaussian private quantum channel with squeezed coherent states," 
\href{http://dx.doi.org/10.1038/srep13974}{
Sci. Rep.~\textbf{5}, 13974 (2015)}.
%
\bibitem{WCHZ14}
Y. Wu, R. Cai, G. He, and J. Zhang,
``Quantum secret sharing with continuous variable graph state," 
\href{http://dx.doi.org/10.1007/s11128-013-0713-7}{
Quantum Inf. Process.~\textbf{13}, 1085 (2014)}.
%
\bibitem{JK15}
K. Jeong and J, Kim,
``Secure sequential transmission of quantum information," 
\href{http://dx.doi.org/10.1007/s11128-015-1054-5}{
Quantum Inf. Process.~\textbf{14}, 3523 (2015)}.
%
\bibitem{JL16}
K. Jeong and Y. Lim,
``Purification of Gaussian maximally mixed states,"
\href{http://dx.doi.org/10.1016/j.physleta.2016.09.001}{
Phys. Lett. A~\textbf{380}, 3607 (2016)}.
%
\bibitem{LKLJ19}
Y. Lim, J. Kim, S. Lee, and K. Jeong,
``Maximally entangled states in discrete and Gaussian regimes,"
\href{http://dx.doi.org/10.1007/s11128-018-2160-y}{
Quantum Inf. Process.~\textbf{18}, 43 (2019)}.
%
\bibitem{HJW93}
L. P. Hughston, R. Jozsa, and W. K. Wootter,
``A complete classification of quantum ensembles having a given density matrix,"
\href{http://dx.doi.org/10.1016/0375-9601(93)90880-9}{
Phys. Lett. A~\textbf{183}, 14 (1993)}.
%
\bibitem{M89}
C. McDiarmid,
``On the method of bounded differences," 
Surv. Comb.~\textbf{141}, pp. 48--188, Cambridge University Press (1989).
%
\bibitem{TRCG19}
M. Di Tullio, R. Rossignoli, M. Cerezo, and N. Gigena,
``Fermionic entanglement in the Lipkin model," 
\href{http://dx.doi.org/10.1103/PhysRevA.100.062104}{
\pra~\textbf{100}, 062104 (2019)}.
%
\bibitem{HJK+19}
M. Hebenstreit, R. Jozsa, B. Kraus, S. Strelchuk, and M. Yoganathan,
``All Pure Fermionic Non-Gaussian States Are Magic States for Matchgate Computations," 
\href{http://dx.doi.org/10.1103/PhysRevLett.123.080503}{
\prl~\textbf{123}, 080503 (2019)}.
%
\bibitem{LNPST11}
C.-K. Li, M. Nakahara, Y.-T. Poon, N.-S. Sze, and H. Tomita,
``Efficient quantum error correction for fully correlated noise," 
\href{http://dx.doi.org/10.1016/j.physleta.2011.07.027}{
Phys. Lett. A~\textbf{375}, 3255 (2011)}.




\end{thebibliography}
\end{document}